\newcommand{\R}{\mathbb{R}}
\newcommand{\E}{\mathbb{E}}
\newcommand{\be}{\begin{equation}}
\newcommand{\ee}{\end{equation}}
\newcommand{\bea}{\begin{eqnarray}}
\newcommand{\eea}{\end{eqnarray}}
\newcommand{\bes}{\begin{equation*}}
\newcommand{\ees}{\end{equation*}}
\newcommand{\beas}{\begin{eqnarray*}}
\newcommand{\eeas}{\end{eqnarray*}}
\newtheorem{thm}{Theorem}
\newtheorem*{thm*}{Theorem}
\newtheorem{lem}[thm]{Lemma}
\newtheorem*{lem*}{Lemma}
\newtheorem{fact}[thm]{Fact}
\newtheoremstyle{definition}
   {}{}{}{0pt}{\bfseries}{.}{.5cm}
   {{\thmname{#1 }}{\thmnumber{#2}}{\thmnote{ (#3)}}}
\theoremstyle{definition}
\newtheorem{dfn}{Definition}
\begin{document}

\title{Almost all decision trees do not allow significant quantum speed-up}

\author{Ashley Montanaro\footnote{Centre for Quantum Information and Foundations, Department of Applied Mathematics and Theoretical Physics, University of Cambridge, UK; {\tt am994@cam.ac.uk}.}}

\maketitle

\begin{abstract}
We show that, for any $d$, all but a doubly exponentially small fraction of decision trees of depth at most $d$ require $\Omega(d)$ quantum queries to be computed with bounded error. In other words, most efficient classical algorithms in the query complexity model do not admit a significant quantum speed-up. The proof is based on showing that, with high probability, the average sensitivity of a random decision tree is high.
\end{abstract}


\section{Introduction}

Many of the most important examples of quantum algorithms which outperform classical algorithms operate in the query complexity model, where the quantity of interest is the number of queries to the input required to compute some function~\cite{buhrman02}. Can {\em most} classical query algorithms be significantly accelerated by a quantum computer? In one sense, the answer is a resounding no. It was shown by Ambainis in 1999 that almost all boolean functions on $n$ bits require $n/4 - O(\sqrt{n} \log n)$ quantum queries to be computed with bounded error~\cite{ambainis99a}; using different techniques, Ambainis et al have very recently improved this bound to $n/2 - o(n)$~\cite{ambainis12a}.

However, random boolean functions are arguably somewhat uninteresting in general in that they are hard both for classical and quantum algorithms to compute. One can interpret the result of \cite{ambainis99a} as saying that there are too many boolean functions, and too few efficient algorithms, whether classical or quantum. Perhaps a more interesting question is whether most functions with efficient classical algorithms in the query model -- i.e.\ short decision trees -- could be computed significantly more quickly on a quantum computer than is possible classically. In this note we show that this is also not the case.

We first need to define the model of decision trees which we will use (see~\cite{buhrman02} for further background). A (boolean) decision tree $T$ is a rooted binary tree where each internal vertex has exactly two children. Such trees are called full; only full binary trees will be considered in this paper. Each vertex is labelled with a variable $x_i$, $1 \le i \le n$, and each leaf is labelled with 0 or 1, corresponding to the output of the tree. Each edge from a node labelled with $x_i$ to its children is labelled with $0$ or $1$, corresponding to the value of $x_i$. $T$ computes a boolean function $T:\{0,1\}^n \rightarrow \{0,1\}$ in the obvious way: starting with the root, the variable labelling each vertex is queried, and dependent on whether the answer is 0 or 1 the left or right subtree is evaluated. When a leaf is reached, the tree outputs the label of that leaf. The depth of a vertex is defined as follows: the root has depth 0, and the depth of any other vertex is equal to the depth of its parent, plus 1. The depth of $T$ is the largest depth of any vertex, i.e.\ the worst-case number of queries made on any input.

Following~\cite{jackson05}, we will consider the {\em uniform} model of random decision trees. In this model, we obtain a random depth $d$ decision tree $T$ on $n \ge d$ variables simply by picking $T$ uniformly at random from the set of all non-redundant decision trees of depth at most $d$ on $n$ variables. A decision tree is said to be non-redundant if no variable occurs more than once on any path from the root to a leaf. Observe that choosing a boolean function by picking a tree uniformly at random does not pick the corresponding function itself uniformly at random from the set of all functions with decision trees of depth $d$, as many different decision trees can represent the same function, and some functions may have more trees representing them than others. Also observe that the internal structure of a uniformly random decision tree $T$ and the values assigned to its leaves are independent. In other words, a random decision tree $T$ of depth at most $d$ on $n$ variables can be obtained by a two-stage process: first, pick the structure and labels of the internal vertices of $T$ uniformly at random from the set of all decision trees of depth at most $d$ on $n$ variables, then assign 0 or 1 to the leaves of $T$ uniformly at random.

One could also consider an alternative model of random decision trees, known as the {\em complete} model~\cite{jackson05}. In this model, $T$'s structure is always that of a complete binary tree of depth $d$. We randomly assign variables to $T$'s internal vertices, consistent with $T$ being non-redundant, and assign 0 or 1 to $T$'s leaves uniformly at random. This model appears somewhat more artificial and we therefore concentrate on the uniform model here; however, our results easily extend to the complete model.

We can now state our main result.

\begin{thm}
\label{thm:boundederr}
Let $T$ be a random depth $d$ decision tree on $n$ variables. Then there exists a universal constant $\alpha > 0$ such that $\Pr[Q_2(f) < \alpha d] \le 2^{-\Omega(2^{d/3})}$.
\end{thm}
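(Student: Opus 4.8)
The plan is to lower-bound the bounded-error quantum query complexity $Q_2(T)$ by the average sensitivity (total influence) $\mathbb{I}(T) = \sum_{i} \Pr_x[T(x) \ne T(x \oplus e_i)]$ of the random tree, and then to show that $\mathbb{I}(T) = \Omega(d)$ with the claimed doubly-exponential confidence. For the first link I would argue purely through Fourier analysis and the polynomial method. Since the total influence equals $\sum_S |S|\,\hat T(S)^2$ (in the $\pm1$ encoding) and any depth-$d$ tree has $\hat T(S) = 0$ for $|S| > d$, a value $\mathbb{I}(T) \ge cd$ forces a constant fraction of the Fourier weight onto levels above $cd/2$: writing $W_{>k} = \sum_{|S|>k}\hat T(S)^2$ one has $cd \le k + (d-k)W_{>k}$, so $W_{>cd/2} = \Omega(1)$. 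Any degree-$k$ polynomial approximating $T$ to error $1/3$ in $\ell_\infty$ also approximates it to squared-$\ell_2$ error at most $1/9$, which is impossible once $W_{>k} > 1/9$; hence $\widetilde{\deg}(T) = \Omega(d)$ and $Q_2(T) \ge \tfrac12\widetilde{\deg}(T) = \Omega(d)$ by Beals et al. So it suffices to prove $\Pr[\mathbb{I}(T) < cd] \le 2^{-\Omega(2^{d/3})}$.

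The key observation for the sensitivity is that, for a fixed tree structure and a fixed input $x$, flipping a variable not queried on the root-to-leaf path $P(x)$ cannot change the output, so the sensitivity $s(T,x)$ equals the number of nodes $v \in P(x)$ at which ``switching'' to the other child leads to a leaf whose label differs from that of $x$'s leaf. The two leaves involved are always distinct, so over the uniformly random leaf labels each switch flips the output with probability exactly $1/2$; by linearity $\E_{\text{labels}}[s(T,x)] = \tfrac12\,\mathrm{depth}(x)$ and therefore $\E_{\text{labels}}[\mathbb{I}(T)] = \tfrac12 D(T)$, where $D(T) = \sum_{\ell} 2^{-d_\ell} d_\ell$ is the expected depth of the leaf reached by a uniform input. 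I would then fix the structure and apply McDiarmid's inequality in the leaf labels: flipping one label at depth $d_\ell$ changes $\mathbb{I}(T)$ by at most $O(d_\ell 2^{-d_\ell})$, and since there are at most $2^j$ leaves at depth $j$, the sum of squared differences is $\sum_\ell O(d_\ell^2 4^{-d_\ell}) \le O\big(\sum_j j^2 2^{-j}\big) = O(1)$. Hence $\mathbb{I}(T)$ is within $o(d)$ of $\tfrac12 D(T)$ except with probability $2^{-\Omega(d^2)}$, which is negligible against the target bound.

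It remains to show that the structural quantity $D(T)$ is $\Omega(d)$ with failure probability $2^{-\Omega(2^{d/3})}$, and this is where the doubly-exponential exponent (and the constant $1/3$) is forced. I would use that the number $a_m$ of full binary trees of depth at most $m$ satisfies $a_0 = 1$, $a_m = a_{m-1}^2 + 1$, so $a_m > 2^{2^{m-1}}$, and that conditioning a uniform depth-$\le d$ tree on having an internal root makes its two subtrees independent uniform depth-$\le(d-1)$ trees. Consequently a node at depth $j$ (given its ancestors are present) roots a uniform depth-$\le(d-j)$ tree and is a leaf with probability $1/a_{d-j}$. A union bound over the at most $2^j$ nodes at each level gives that the tree fails to be complete down to level $k$ with probability at most $\sum_{j=0}^{k-1} 2^j/a_{d-j} \le \sum_{j<k} 2^{\,j-2^{d-j-1}}$. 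Taking $k = \lfloor 2d/3\rfloor$ makes the dominant ($j=k-1$) term $2^{-\Omega(2^{d/3})}$. On the complementary event every leaf has depth at least $2d/3$, so by Kraft's equality $D(T) = \sum_\ell 2^{-d_\ell} d_\ell \ge 2d/3$, giving $\E_{\text{labels}}[\mathbb{I}(T)] \ge d/3$ and hence, combining with the concentration step, $\mathbb{I}(T) = \Omega(d)$ with the required probability.

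The main obstacle is the structural estimate of the final paragraph: everything downstream (the Fourier/approximate-degree bound and the label concentration) is comparatively soft, and the doubly-exponential $2^{-\Omega(2^{d/3})}$ confidence comes entirely from balancing the singly-exponential number $2^k$ of level-$k$ nodes against the doubly-exponential growth of $a_{d-k}$. The delicate point is to extract a robust lower bound on the \emph{expected} depth $D(T)$ --- rather than merely on the maximum depth --- from the near-certain completeness of the top $2d/3$ levels, and to confirm that the (much stronger) label concentration never becomes the binding constraint.
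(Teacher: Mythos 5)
Your overall route coincides with the paper's: lower-bound $Q_2$ by total influence (you re-derive Shi's influence lower bound via the polynomial method where the paper simply cites it), compute the expected influence over random leaf labels as half the expected leaf depth, concentrate over the labels for a fixed structure, and control the minimum leaf depth by balancing the $2^j$ nodes per level against the doubly exponential count $a_{d-j}$ of subtrees. Your first and third paragraphs are sound (modulo constant-chasing in the $\pm 1$ encoding, which amplification fixes). The flaw is in your second paragraph, and it is quantitative but real.

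As you set up the McDiarmid step, it does not deliver the claimed probability, and your assertion that $2^{-\Omega(d^2)}$ is ``negligible against the target bound'' is backwards. You bound the sum of squared coordinate differences over an \emph{arbitrary} structure: $\sum_\ell c_\ell^2 = O\bigl(\sum_j 2^j j^2 4^{-j}\bigr) = O(1)$, so with deviation $\delta = \Theta(d)$ McDiarmid yields failure probability $e^{-\Omega(d^2)}$. But $2^{-\Omega(d^2)}$ is far \emph{larger} than $2^{-\Omega(2^{d/3})}$, so it dominates the union bound: your argument as written proves only $\Pr[Q_2 < \alpha d] \le 2^{-\Omega(d^2)}$, which is strictly weaker than the theorem. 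The fix is to run the concentration step only after conditioning on the event, established in your third paragraph, that every leaf has depth at least $2d/3$: then $\sum_\ell c_\ell^2 \le \sum_{j \ge 2d/3} 2^j \cdot O(j^2 4^{-j}) = O(d^2 2^{-2d/3})$, and the exponent becomes $\Omega(\delta^2 2^{2d/3}/d^2) = \Omega(2^{2d/3})$, which genuinely is negligible against $2^{-\Omega(2^{d/3})}$. This ordering is exactly how the paper proceeds: its Lipschitz bound $\max_\ell d(\ell)2^{1-d(\ell)}$ is only invoked once $\min_\ell d(\ell) \ge 2d/3$ is in force (and your per-leaf constants would then actually improve on the paper's uniform-$\eta$ bound for this step). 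A secondary gap: your structural estimate treats the random tree as a uniformly random \emph{binary tree structure}, but the model is uniform over non-redundant labelled decision trees, in which structures with more internal vertices carry more variable assignments and hence more weight; the paper notes this mismatch and observes that the bias can only decrease the probability that a given vertex is a leaf, so the bound survives --- you need this (easy) extra argument as well.
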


In the above theorem, $Q_2(f)$ is the bounded-error quantum query complexity of $f$~\cite{buhrman02}, i.e.\ the smallest number of quantum queries required to compute $f$ with worst-case success probability at least $2/3$. Observe that the concentration bound obtained is doubly exponentially small in $d$. This result thus implies that the vast majority of efficient classical algorithms in the query complexity model do not admit a significant quantum speed-up.

The counting technique used by Ambainis~\cite{ambainis99} to prove that most boolean functions do not have query-efficient quantum algorithms does not seem to suffice to prove Theorem \ref{thm:boundederr}. Indeed, Ambainis showed that there are at most $2^{O(n^{2d+3} d)}$ boolean functions on $n$ bits that can be computed by quantum algorithms making $d$ queries. However, there can be $n^{\Theta(2^d)}$ decision trees of depth $d$. Thus this argument does not disallow, for example, the possibility that random decision trees of depth $d = O(\log^2 n)$ could be evaluated using $O(\log n)$ quantum queries.

We therefore take another approach to proving the above theorem, by giving a lower bound on the {\em average sensitivity} (also known as total influence) of random decision trees. This quantity is mathematically tractable and is known to lower bound quantum query complexity~\cite{shi00}.


\section{Proof of Theorem \ref{thm:boundederr}}
\label{sec:proof}

The average sensitivity of a boolean function $f:\{0,1\}^n \rightarrow \{0,1\}$ is defined as
\[ \bar{s}(f) = \frac{1}{2^n} \sum_{x \in \{0,1\}^n} \sum_{i=1}^n |f(x) - f(x^i)|, \]
where $x^i$ is the $n$-bit string obtained by flipping the $i$'th bit of $x$. Up to normalisation, $\bar{s}(f)$ thus counts the number of neighbours $x$, $y$ such that $f(x) \neq f(y)$. Average sensitivity gives the following lower bound on quantum query complexity.

\begin{thm}[Shi~\cite{shi00}, see~\cite{dewolf08} for the version here]
\label{thm:shi}
Assume that $f:\{0,1\}^n \rightarrow \{0,1\}$ can be computed by a quantum algorithm making $q$ queries and with worst-case failure probability $\epsilon \le 1/2$. Then $q \ge \frac{1}{2}(1-2\epsilon)^2 \bar{s}(f)$.
\end{thm}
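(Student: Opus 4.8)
The plan is to combine the polynomial method with elementary Fourier analysis on the Boolean cube. By the standard fact that the acceptance probability of a $q$-query quantum algorithm is a multilinear polynomial of degree at most $2q$ in the input bits~\cite{buhrman02}, there is a real polynomial $P$ with $\deg P \le 2q$, with $0 \le P(x) \le 1$ for every $x \in \{0,1\}^n$, and with $|P(x) - f(x)| \le \epsilon$ for all $x$ (this encodes the failure-probability bound). It is convenient to pass to $\pm 1$ notation: write $g = 1 - 2f$ and $p = 1 - 2P$, so that $g:\{0,1\}^n \to \{-1,1\}$, $p$ is a polynomial of degree at most $2q$, $|p(x) - g(x)| \le 2\epsilon$ for all $x$, and, crucially, $|p(x)| \le 1$ everywhere since $P$ takes values in $[0,1]$. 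Expanding in the Fourier basis $\{\chi_S\}$ with $\chi_S(x) = (-1)^{\sum_{i \in S} x_i}$, average sensitivity equals the total influence $\bar{s}(f) = \sum_S |S|\, \hat{g}(S)^2$, and the whole argument reduces to bounding $\sum_S |S|\, \hat{p}(S)^2$ from above and below.

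For the lower bound I would use a pointwise comparison of discrete derivatives. Fix a direction $i$ and a neighbouring pair $x, x^i$. If $f(x) \ne f(x^i)$ then $|g(x) - g(x^i)| = 2$, so the triangle inequality gives $|p(x) - p(x^i)| \ge 2 - 4\epsilon = 2(1 - 2\epsilon)$; if $f(x) = f(x^i)$ the corresponding term is simply non-negative. Since $\epsilon \le 1/2$, squaring and averaging over $x$ yields $\E_x[(p(x) - p(x^i))^2] \ge 4(1-2\epsilon)^2 \Pr_x[f(x) \ne f(x^i)]$. Summing over $i$ and using the Fourier identity $\E_x[(p(x)-p(x^i))^2] = 4\sum_{S \ni i} \hat{p}(S)^2$ on the left, together with $\bar{s}(f) = \sum_i \Pr_x[f(x) \ne f(x^i)]$ on the right, the factors of $4$ cancel and I obtain $\sum_S |S|\, \hat{p}(S)^2 \ge (1-2\epsilon)^2\, \bar{s}(f)$.

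For the upper bound, every $S$ with $\hat{p}(S) \ne 0$ has $|S| \le \deg p \le 2q$, so $\sum_S |S|\, \hat{p}(S)^2 \le 2q \sum_S \hat{p}(S)^2 = 2q\, \E_x[p(x)^2] \le 2q$, where the last inequality is exactly where the range constraint $|p| \le 1$ (i.e.\ that $P$ is a genuine probability) is used. Chaining the two bounds gives $(1-2\epsilon)^2\, \bar{s}(f) \le 2q$, which is the claim $q \ge \tfrac{1}{2}(1-2\epsilon)^2\, \bar{s}(f)$. The step I expect to matter most is the decision to bound $\E_x[p(x)^2] \le 1$ via the global inequality $0 \le P \le 1$, rather than the weaker $|p| \le 1 + 2\epsilon$ that follows from the approximation guarantee alone: only the former yields the sharp constant $\tfrac{1}{2}$, which is attained (as $\epsilon \to 0$) by $f = x_1 \oplus \cdots \oplus x_n$, where $\bar{s}(f) = n$ and $Q_2(f) = n/2$.
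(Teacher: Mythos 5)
Your proof is correct, and in fact the paper contains no proof of this statement at all: Theorem~\ref{thm:shi} is imported as a black box from Shi~\cite{shi00}, in the form stated in de Wolf~\cite{dewolf08}. Your reconstruction—the polynomial method giving $\deg p \le 2q$, the pointwise comparison $|p(x)-p(x^i)| \ge 2(1-2\epsilon)$ on sensitive edges combined with the identity $\E_x[(p(x)-p(x^i))^2] = 4\sum_{S \ni i} \hat{p}(S)^2$, and the degree bound $\sum_S |S|\,\hat{p}(S)^2 \le 2q\,\E_x[p(x)^2] \le 2q$ using $0 \le P \le 1$—is precisely the standard argument in those cited references, including the observation that the range constraint (rather than the weaker $|p| \le 1+2\epsilon$) is what yields the sharp constant $\tfrac{1}{2}$ matched by parity.
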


Our goal will now be to upper bound the probability that the average sensitivity of a random decision tree is low. To do so, we will use the following powerful measure concentration result.

\begin{thm}[See e.g.\ \cite{ledoux01} or {\cite[Corollary 5.2]{dubhashi09}}]
\label{thm:measconc}
Fix $\eta>0$ and assume that $g:\{0,1\}^n \rightarrow \R$ satisfies $|g(x)-g(y)| \le \eta$ for all $x,y \in \{0,1\}^n$ such that $d(x,y)=1$, where $d(x,y)$ is the Hamming distance between $x$ and $y$. Let $x \in \{0,1\}^n$ be picked uniformly at random. Then
\[ \Pr_x\left[ g(x) < \E_x[g(x)] - \delta\right] \le e^{-2\delta^2/(n \eta^2)}. \]
\end{thm}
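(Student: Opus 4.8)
The plan is to recognise Theorem~\ref{thm:measconc} as an instance of the bounded-differences (McDiarmid) inequality and to prove it by the martingale method. Writing $x = (x_1, \dots, x_n)$ for a uniformly random point of $\{0,1\}^n$, so that the coordinates are independent uniform bits, I would build the Doob martingale associated with $g$: for $0 \le k \le n$ set
\[ Z_k = \E\left[ g(x) \mid x_1, \dots, x_k \right], \]
so that $Z_0 = \E_x[g(x)]$ and $Z_n = g(x)$. This is a martingale with respect to the filtration generated by revealing the coordinates one at a time, and its total increment $Z_n - Z_0 = g(x) - \E_x[g(x)]$ is exactly the quantity we wish to control.

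The crucial step --- and the one I expect to be the main obstacle --- is to show that each martingale difference $D_k = Z_k - Z_{k-1}$ lives, conditionally on the past, in an interval of length at most $\eta$. Fixing the values of $x_1, \dots, x_{k-1}$ and setting
\[ U_k = \sup_{a \in \{0,1\}} \E\left[ g(x) \mid x_1, \dots, x_{k-1}, x_k = a\right], \quad L_k = \inf_{a \in \{0,1\}} \E\left[ g(x) \mid x_1, \dots, x_{k-1}, x_k = a\right], \]
one has $L_k \le Z_k \le U_k$, and a coupling of the remaining coordinates $x_{k+1}, \dots, x_n$ shows that $U_k - L_k$ is an average over those coordinates of terms of the form $|g(y) - g(y^k)|$, each bounded by $\eta$ by hypothesis; hence $U_k - L_k \le \eta$. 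Getting this averaging argument exactly right --- so that the single-coordinate Lipschitz assumption yields a \emph{conditional} range bound, rather than merely a bound on $\E[|D_k|]$ --- is the delicate point on which the whole estimate rests.

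With the conditional range bound in hand the remainder is routine. Since $\E[D_k \mid x_1, \dots, x_{k-1}] = 0$ and $D_k$ has conditional range at most $\eta$, Hoeffding's lemma gives $\E[e^{\lambda D_k} \mid x_1, \dots, x_{k-1}] \le e^{\lambda^2 \eta^2/8}$ for every $\lambda$. Iterating this estimate through the tower property yields $\E[e^{-\lambda(Z_n - Z_0)}] \le e^{n\lambda^2\eta^2/8}$, and a Chernoff bound then gives
\[ \Pr_x\left[ g(x) < \E_x[g(x)] - \delta \right] \le e^{-\lambda\delta + n\lambda^2\eta^2/8} \]
for every $\lambda > 0$. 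Optimising by taking $\lambda = 4\delta/(n\eta^2)$ produces the exponent $-2\delta^2/(n\eta^2)$, which is exactly the claimed bound.
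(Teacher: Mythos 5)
Your proof is correct. Note, however, that the paper itself does not prove Theorem~\ref{thm:measconc} at all: it is imported as a black box from the literature (Ledoux's monograph, or Corollary~5.2 of Dubhashi--Panconesi), and the paper's contribution lies entirely in how it is applied to $g(z)=\bar{s}(T_z)$. What you have reconstructed is the standard proof of the one-sided bounded-differences (McDiarmid) inequality, which is essentially what those references contain: the Doob martingale $Z_k=\E[g(x)\mid x_1,\dots,x_k]$, the conditional range bound $U_k-L_k\le\eta$ obtained by coupling the unrevealed coordinates, Hoeffding's lemma, and the Chernoff optimisation $\lambda=4\delta/(n\eta^2)$. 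You also correctly identified the one point requiring care: to get the constant $2$ in the exponent $-2\delta^2/(n\eta^2)$ you need the \emph{range} form of Hoeffding's lemma (conditional range at most $\eta$, giving $e^{\lambda^2\eta^2/8}$), rather than the cruder Azuma hypothesis $|D_k|\le\eta$, which would only yield $e^{-\delta^2/(2n\eta^2)}$ --- still sufficient for the paper's purposes, since Theorem~\ref{thm:boundederr} only needs a bound of the form $2^{-\Omega(2^{d/3}\epsilon^2)}$, but weaker than the stated theorem. So your argument is a valid self-contained substitute for the citation, not an alternative to any argument in the paper.
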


We first calculate the expected average sensitivity of random decision trees. As discussed previously, we will choose a uniformly random decision tree by first choosing the structure of the tree, then choosing assignments to the leaves uniformly at random. So fix a decision tree $T$ on $n$ variables and let $\mathcal{L}$ be the set of leaves of $T$, setting $L := |\mathcal{L}|$. Let $T_z:\{0,1\}^n \rightarrow \{0,1\}$ be the boolean function obtained by assigning $z \in \{0,1\}^L$ to the leaves of $T$; we will eventually apply Theorem \ref{thm:measconc} to the function $g:\{0,1\}^L \rightarrow \R$ defined by $g(z) = \bar{s}(T_z)$. 
For any leaf $\ell$, let $d(\ell)$ be the depth of $\ell$.

\begin{lem}
\label{lem:expas}
\[ \E_{z\in\{0,1\}^L}[\bar{s}(T_z)] = \frac{1}{2} \sum_{\ell \in \mathcal{L}} d(\ell) 2^{-d(\ell)}. \]
\end{lem}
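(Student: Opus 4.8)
The plan is to expand the definition of average sensitivity, exchange the order of summation and expectation by linearity, and then reduce the problem to a clean combinatorial count over the leaves of the fixed tree $T$. Writing out the definition,
\[ \E_z[\bar{s}(T_z)] = \frac{1}{2^n} \sum_{x \in \{0,1\}^n} \sum_{i=1}^n \E_z\bigl[|T_z(x) - T_z(x^i)|\bigr], \]
and since $T_z(x), T_z(x^i) \in \{0,1\}$, each inner expectation is simply $\Pr_z[T_z(x) \neq T_z(x^i)]$. The whole computation therefore hinges on evaluating this single probability for each pair $(x,i)$.

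The key structural observation I would establish is a dichotomy. Because $T$ is non-redundant, the computation path taken by $x$ and that taken by $x^i$ coincide unless the variable $x_i$ is queried somewhere along the path of $x$; if it is queried, the two paths first diverge at that vertex and, since distinct subtrees have disjoint leaf sets, $x$ and $x^i$ land on two distinct leaves. Hence if $i$ is not queried on $x$'s path then $T_z(x) = T_z(x^i)$ deterministically and the probability is $0$, whereas if $i$ is queried then the two outputs are the independent, uniform labels of two distinct leaves, so $\Pr_z[T_z(x) \neq T_z(x^i)] = 1/2$. This is the main point I would want to get right, as everything else is bookkeeping; in particular the factor of $1/2$ relies on the independence of the leaf labels assigned in the second stage of the random construction.

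Granting the dichotomy, for a fixed $x$ the number of indices $i$ contributing the value $1/2$ is exactly the number of variables queried on $x$'s path, which equals the depth $d(\ell)$ of the leaf $\ell$ reached by $x$. Thus the double sum collapses to $\frac{1}{2\cdot 2^n}\sum_x d(\ell)$, where $\ell$ denotes the leaf reached by $x$. I would then group the sum by leaves: the inputs reaching a given leaf $\ell$ are exactly those whose $d(\ell)$ queried bits are fixed to the values labelling the path to $\ell$, leaving $n - d(\ell)$ free bits, so there are precisely $2^{n - d(\ell)}$ of them. Substituting yields
\[ \E_z[\bar{s}(T_z)] = \frac{1}{2\cdot 2^n} \sum_{\ell \in \mathcal{L}} d(\ell)\, 2^{n - d(\ell)} = \frac{1}{2} \sum_{\ell \in \mathcal{L}} d(\ell)\, 2^{-d(\ell)}, \]
as claimed. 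The remaining steps beyond the independence argument are routine applications of linearity and of the path-counting identity, so I expect no further obstacle.
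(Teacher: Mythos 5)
Your proposal is correct and follows essentially the same route as the paper's proof: expand $\bar{s}$, use linearity, observe that $\E_z[|T_z(x)-T_z(x^i)|]$ is $1/2$ or $0$ according to whether $x_i$ is queried on $x$'s path, and then convert the resulting sum over inputs into the sum over leaves via the fact that exactly a $2^{-d(\ell)}$ fraction of inputs reach leaf $\ell$. Your version merely spells out the divergence/independence justification and phrases the last step as a count ($2^{n-d(\ell)}$ inputs per leaf) rather than a probability, which are cosmetic differences.
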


\begin{proof}
From the definition of $g$,
\[ \E_{z\in\{0,1\}^L}[\bar{s}(T_z)] = \frac{1}{2^n} \sum_{x \in \{0,1\}^n} \sum_{i=1}^n \E_{z\in\{0,1\}^L}[|T_z(x) - T_z(x^i)|]. \]
Now observe that for each $i$, if $T$ queries the $i$'th bit on input $x$, $\E_{z\in\{0,1\}^L}[|T_z(x) - T_z(x^i)|] = \frac{1}{2}$. Otherwise, $\E_{z\in\{0,1\}^L}[|T_z(x) - T_z(x^i)|] = 0$. So we obtain
\[ \E_{z\in\{0,1\}^L}[\bar{s}(T_z)] = \frac{1}{2^{n+1}} \sum_{x \in \{0,1\}^n} |\{i: T \text{ queries $i$ on }x\}| = \frac{1}{2} \E_{x \in \{0,1\}^n} [\text{path length on input }x]. \]
As each query to the input gives 0 or 1 with equal probability on a random input $x$, the probability of ending up at a given leaf $\ell$ on a random input is just $2^{-d(\ell)}$. The claim follows.
\end{proof}

We can also bound $\eta = \max_{d(w,z)=1} |\bar{s}(T_w)-\bar{s}(T_z)|$ as follows.

\begin{lem}
\label{lem:lipschitz}
Let $w,z \in \{0,1\}^L$ satisfy $d(w,z)=1$. Then
\[ |\bar{s}(T_w)-\bar{s}(T_z)| \le \max_{\ell \in \mathcal{L}} d(\ell) 2^{1-d(\ell)}. \]
\end{lem}

\begin{proof}
The idea is essentially the same as the proof of Lemma \ref{lem:expas}. We have
\[ \bar{s}(T_w) - \bar{s}(T_z) = \frac{1}{2^n} \sum_{x \in \{0,1\}^n} \sum_{i=1}^n \left( |T_w(x) - T_w(x^i)| - |T_z(x) - T_z(x^i)| \right). \]
As $w$ and $z$ only differ on one leaf $\ell$, $|T_w(x) - T_w(x^i)| = |T_z(x) - T_z(x^i)|$ unless precisely one of $x$ and $x^i$ leads to $\ell$. Thus
\[ |\bar{s}(T_w) - \bar{s}(T_z)| \le \frac{1}{2^n} \sum_{x,x \rightarrow \ell} \sum_{i, x^i \nrightarrow \ell} 1 + \frac{1}{2^n} \sum_{x,x \nrightarrow \ell} \sum_{i, x^i \rightarrow \ell} 1 = \frac{1}{2^{n-1}} \sum_{x,x \rightarrow \ell} |\{i:x^i \nrightarrow \ell\}|, \]
where we use the notation $x \rightarrow \ell$ and $x \nrightarrow \ell$ to mean that evaluation of the tree does or does not end up in leaf $\ell$ on input $x$, respectively. For each $x$, there can only be at most $d(\ell)$ variables $i$ such that $x \rightarrow \ell$ and $x^i \nrightarrow \ell$. Therefore
\[ |\bar{s}(T_w) - \bar{s}(T_z)| \le 2d(\ell) \Pr_{x \in \{0,1\}^n} [x \rightarrow \ell] = d(\ell) 2^{1-d(\ell)}. \]
\end{proof}

Note that this bound is quite weak when there exists a leaf $\ell$ in $T$ of low depth. Nevertheless, its dependence on $d(\ell)$ is essentially tight: consider the tree $T$ that computes
\[ T(x) = \alpha x_1 + (1-x_1)x_2 x_3 \dots x_n, \]
for $\alpha \in \{0,1\}$. Then changing $\alpha$ from 0 to 1 changes $\bar{s}(T)$ from close to 0 to close to 1. Luckily, it turns out that in fact $\min_{\ell \in \mathcal{L}} d(\ell)$ will be high with high probability; roughly speaking, a random tree is unlikely to have any low-hanging fruit. We formalise this as the following lemma, which we prove at the end.

\begin{lem}
\label{lem:lowdepth}
Let $T$ be picked uniformly at random from the set of decision trees of depth at most $d$. Then, for any $h \le d - \log_2 d - 2$, the probability that $T$ has a leaf with depth at most $h$ is at most $2^{1-2^{d-h-2}}$.
\end{lem}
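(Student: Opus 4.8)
The plan is to understand the probability that a uniformly random decision tree of depth at most $d$ has a shallow leaf, and the key is to get a handle on the \emph{recursive} structure of such trees. Let me think about what a uniformly random tree of depth at most $d$ looks like. Let $N_d$ denote the number of (internal structures of) such trees on a fixed variable set; since the tree is full, the root either is a leaf or has two subtrees, each of depth at most $d-1$. Because the number of non-redundant labelled trees grows extremely fast in $d$—roughly like $n^{\Theta(2^d)}$—the dominant contribution comes from trees that are as bushy as possible. So heuristically a random tree is very close to a complete binary tree of depth $d$, and the chance of a short leaf should be tiny.

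First I would set up the combinatorics precisely. Let me write $f(h,d)$ for the probability that a uniformly random tree of depth at most $d$ has some leaf at depth at most $h$, and aim to bound it via the structure at the root. The cleanest route is a union bound over the positions where a shallow leaf could occur, combined with the observation that the event ``the root's left (or right) subtree terminates as a leaf early'' is rare because choosing a leaf at a given node forfeits the enormous multiplicative factor of trees available below that node. Concretely, conditioned on the shape down to some node $v$ at depth $j \le h$, the probability that $v$ is a leaf rather than an internal vertex is at most the ratio of (number of trees with $v$ a leaf) to (number of trees with $v$ internal), and this ratio is dominated by $1/N_{d-j-1}$-type quantities, which decay doubly exponentially since $N_{d-j-1}$ itself is doubly exponential in $d-j$.

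The heart of the argument is therefore a good lower bound on $N_m$, the number of tree-structures of depth at most $m$ (on sufficiently many variables, with $n \ge d$). One shows $N_m \ge N_{m-1}^2$ up to polynomial factors—since a depth-$\le m$ tree is a root plus two independent depth-$\le(m-1)$ subtrees—so that $\log_2 N_m = \Omega(2^m)$, i.e.\ $N_m \ge 2^{c\,2^m}$ for a suitable constant. With the constraint $h \le d - \log_2 d - 2$, the relevant quantity $d - h$ is at least $\log_2 d + 2$, so $2^{d-h-2} \ge d$, and the number of candidate shallow-leaf locations (at most $\sum_{j\le h} 2^j \le 2^{h+1}$) is dwarfed by the doubly-exponential suppression factor $2^{-2^{d-h}}$ coming from each location. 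I would organise this as: bound the probability that a \emph{fixed} node at depth $j$ is an early-terminating leaf by roughly $2^{-2^{d-j-1}}$, then union-bound over all $O(2^h)$ such nodes to land at $2^{1-2^{d-h-2}}$, absorbing the polynomial and counting overheads into the slack provided by the $\log_2 d + 2$ term.

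The main obstacle will be making the combinatorial counting rigorous rather than heuristic—in particular, carefully accounting for the variable-labellings (the non-redundancy constraint couples the two subtrees through which variables remain available) and verifying that the per-node suppression is genuinely $2^{-\Omega(2^{d-j})}$ uniformly. The labelling subtlety means $N_m$ is not exactly $N_{m-1}^2$ but carries falling-factorial factors in $n$; I expect these factors to only help (they make bushy trees even more favourable) but one must check they do not degrade the base of the double exponential. The cleanest way to sidestep delicate label-counting is to condition on the tree \emph{shape} alone, prove the shape of a random tree has no shallow leaf with the stated probability, and note that the labelling is chosen independently and uniformly and so does not affect which leaves exist at which depths. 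Once the shape-counting recursion $\log_2(\text{number of shapes of depth} \le m) = \Omega(2^m)$ is established, the union bound delivers the exponent $2^{d-h-2}$ directly, and the threshold $h \le d-\log_2 d - 2$ is exactly what guarantees $2^{d-h-2} \ge d/ \mathrm{const}$ so that the bound is nontrivial.
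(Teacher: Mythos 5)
Your overall strategy is the paper's: lower-bound the number $N_m$ of tree shapes of depth at most $m$ via the recursion $N_m = N_{m-1}^2+1$ (so $N_m \ge 2^{2^{m-1}}$), bound the probability that a fixed vertex at depth $j$ is a leaf by $1/N_{d-j} \le 2^{-2^{d-j-1}}$, and union-bound over the at most $2^{h+1}$ candidate positions, using $h \le d - \log_2 d - 2$ (equivalently $2^{d-h-2} \ge d > h$) to absorb the $2^{h}$ factor and land on $2^{1-2^{d-h-2}}$. That arithmetic is sound in the form you state in your third paragraph (per-node suppression $2^{-2^{d-j-1}}$); note that your earlier ``$1/N_{d-j-1}$-type'' per-node bound, i.e.\ $2^{-2^{d-j-2}}$, would \emph{not} suffice, since then the $j=h$ term of the union bound is $2^{h-2^{d-h-2}}$, which overshoots the claimed bound by a factor of about $2^{h}$.

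The genuine gap is in your final ``sidestep.'' You propose to prove the claim for a uniformly random \emph{shape} and then transfer it to the uniform labelled model on the grounds that the labelling is ``chosen independently and uniformly and so does not affect which leaves exist at which depths.'' The second clause is true but irrelevant: the problem is that the shape marginal of a uniformly random labelled decision tree is \emph{not} the uniform distribution on shapes. Bushier shapes admit many more non-redundant labellings and hence carry proportionally more weight, so a statement proved for uniform shapes does not transfer by any independence argument; this is exactly the distinction the paper flags when deducing Lemma \ref{lem:lowdepth} from Lemma \ref{lem:lowdepth2}. As you suspect earlier in your sketch, the bias does go in the favourable direction, but that requires an argument rather than an independence claim. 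The clean fix is the following: for a fixed vertex position $v$ at depth $k$, condition on everything in the tree outside the subtree rooted at $v$; that subtree then ranges uniformly over all non-redundant labelled trees of depth at most $d-k$ on the $n-k \ge d-k$ variables not used on the path to $v$, and there are at least $N_{d-k}$ such trees (every shape admits at least one valid labelling), only one of which is the single leaf. Hence $\Pr[v \text{ is a leaf}] \le 1/N_{d-k}$ holds in the labelled model as well, and your union bound then goes through verbatim.
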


Assume that $d(\ell) \ge 2d/3$ for all $\ell \in \mathcal{L}$. Then, using Theorem \ref{thm:measconc} and Lemma \ref{lem:lipschitz}, we have
\beas
\Pr_z \left[\bar{s}(T_z) < \E[ \bar{s}(T_z) ] - \delta\right] \le e^{-2\delta^2/(L \eta^2)} \le e^{-(9/8)2^{d/3}\delta^2 / d^2}.
\eeas
Also, if $d(\ell) \ge 2d/3$ for all $\ell \in \mathcal{L}$, by Lemma \ref{lem:expas} $\E[ \bar{s}(T_z) ] \ge d/3$. Thus
\[ \Pr_z [ \bar{s}(T_z) < (1-\epsilon) d/3 ] \le e^{-2^{d/3-3} \epsilon^2}. \]
Now let $T$ be picked uniformly at random from the set of decision trees of depth at most $d$. Taking $h=2d/3$ in Lemma \ref{lem:lowdepth} and using a union bound over the two bad events that $\min_{\ell \in \mathcal{L}} d(\ell) \le 2d/3$ and $\bar{s}(T_z) < (1-\epsilon) d/3$, we obtain the overall bound that
\[ \Pr_T [ \bar{s}(T) < (1-\epsilon) d/3 ] \le 2^{1-2^{d/3-2}} + e^{-2^{d/3-3} \epsilon^2}, \]
which is clearly of order $2^{-\Omega(2^{d/3} \epsilon^2)}$. By Theorem \ref{thm:shi}, $Q_2(T) \ge \bar{s}(T)/18$. Taking $\epsilon$ to be an arbitrary constant such that $0<\epsilon<1$ completes the proof of Theorem \ref{thm:boundederr}.

Observe that, if one is content with looser concentration bounds, it is possible to show that $\Pr_T [ \bar{s}(T) < d/2 - O(\log d) ] = 2^{-\Omega(d)}$, by taking $h=d-O(\log d)$ in Lemma \ref{lem:lowdepth}.


\subsection{Remaining lemmas}

In order to prove Lemma \ref{lem:lowdepth} we will need a simple bound on the number of low depth binary trees. 

\begin{fact}
\label{fact:counting}
Let $N_d$ be the number of binary trees of depth at most $d$. For $d \ge 1$, $N_d = N_{d-1}^2 + 1$, and $N_0=1$. Thus $N_d \ge 2^{2^{d-1}}$ for all $d \ge 1$.
\end{fact}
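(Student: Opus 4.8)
The plan is to establish the recurrence first by a direct structural decomposition, and then obtain the lower bound by a one-line induction.

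For the recurrence, I would classify every full binary tree $T$ of depth at most $d$ according to its root. Since every internal vertex has exactly two children, the root is either itself a leaf---in which case $T$ is the trivial single-vertex tree of depth $0$---or an internal vertex with exactly two children. In the latter case, deleting the root leaves an ordered pair of full binary trees rooted at its two children, and each of these has depth at most $d-1$: re-attaching a subtree under a fresh root raises the depth of every one of its vertices by exactly $1$, so a subtree of depth $e$ contributes a vertex of depth $e+1$ to $T$, and $T$ having depth at most $d$ forces $e \le d-1$. Conversely, any ordered pair of full binary trees of depth at most $d-1$ joins under a common root to give a full binary tree of depth at most $d$. This is a bijection between the non-trivial trees of depth at most $d$ and ordered pairs of trees of depth at most $d-1$, so there are exactly $N_{d-1}^2$ of the former; adding back the single trivial tree yields $N_d = N_{d-1}^2 + 1$. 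The base case $N_0 = 1$ holds because the only tree of depth $0$ is the lone root.

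For the lower bound I would induct on $d$. The base case $d=1$ is immediate, since $N_1 = N_0^2 + 1 = 2 = 2^{2^0}$. For the inductive step, assuming $N_{d-1} \ge 2^{2^{d-2}}$ for some $d \ge 2$, I would simply discard the additive $1$ and square:

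\[ N_d = N_{d-1}^2 + 1 \ge N_{d-1}^2 \ge \left(2^{2^{d-2}}\right)^2 = 2^{2 \cdot 2^{d-2}} = 2^{2^{d-1}}, \]

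which closes the induction and gives the claimed bound for all $d \ge 1$.

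There is no substantive obstacle here; the only step meriting care is the structural argument establishing the recurrence, where one must keep the trivial single-vertex tree cleanly separated from the trees whose root has two children, and verify that the depth bookkeeping---each of the two subtrees having depth at most $d-1$---is exactly right under the convention that the root has depth $0$. The remaining estimate is a routine induction.
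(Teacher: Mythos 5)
Your proof is correct and follows essentially the same route as the paper: decomposing a tree at its root into a single leaf or an ordered pair of subtrees of depth at most $d-1$ to get $N_d = N_{d-1}^2 + 1$, then a short induction dropping the $+1$ to obtain $N_d \ge 2^{2^{d-1}}$. You simply spell out the details that the paper's terse proof leaves implicit.
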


\begin{proof}
Every binary tree of depth at most $d$ either consists of a single leaf or of two independent binary trees of depth at most $d-1$; the claimed recurrence for $N_d$ is immediate. The second part easily follows from this recurrence by induction.
\end{proof}

\begin{lem}
\label{lem:lowdepth2}
Let $T$ be picked uniformly at random from the set of binary trees of depth at most $d$. Then, for any $h \le d - \log_2 d - 2$, the probability that $T$ has a leaf with depth at most $h$ is at most $2^{1-2^{d-h-2}}$.
\end{lem}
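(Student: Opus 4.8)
The plan is to count trees directly and then analyze the resulting ratio. First I would observe that a binary tree of depth at most $d$ has \emph{no} leaf at depth at most $h$ precisely when every vertex at depths $0,1,\dots,h$ is internal; equivalently, its top $h+1$ levels form a complete binary tree, below which hang $2^{h+1}$ independent binary trees of depth at most $d-h-1$. Since these subtrees are otherwise unconstrained, the number of such ``good'' trees is exactly $N_{d-h-1}^{2^{h+1}}$, so, writing $k = d-h-1$, the probability that $T$ has a leaf at depth at most $h$ equals $1 - N_k^{2^{d-k}}/N_d$.

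The heart of the argument is to show that $N_d/N_k^{2^{d-k}}$ is close to $1$. Using the recurrence $N_j = N_{j-1}^2 + 1$ from Fact~\ref{fact:counting}, I would telescope to obtain the identity
\[ \frac{N_d}{N_k^{2^{d-k}}} = \prod_{j=k+1}^{d}\left(1 + \frac{1}{N_{j-1}^2}\right)^{2^{d-j}}. \]
Then, bounding $1 - 1/P \le P-1$ for $P \ge 1$ and using $1+x \le e^x$, the bad probability is at most $e^S - 1$, where $S = \sum_{j=k+1}^d 2^{d-j}/N_{j-1}^2$. Inserting the double-exponential lower bound $N_{j-1}^2 \ge 2^{2^{j-1}}$ from Fact~\ref{fact:counting} gives $S \le \sum_{j=k+1}^d 2^{d-j-2^{j-1}}$, a sum whose terms fall off doubly exponentially and is therefore dominated by its first term; this yields $S \le 2^{d-k-2^k}$, and hence, since $S$ is tiny, a bad probability at most $2S \le 2^{d-k+1-2^k}$.

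Finally I would invoke the hypothesis $h \le d - \log_2 d - 2$, which is exactly the statement $2^{k-1} \ge d \ge d-k$. This lets me absorb the polynomial prefactor $2^{d-k}$ into the double-exponential, since $d-k+1-2^k \le 1 - 2^{k-1} = 1 - 2^{d-h-2}$, giving precisely the claimed bound $2^{1-2^{d-h-2}}$.

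The main obstacle I anticipate is the middle step: one must control the product tightly enough that the spurious polynomial factor $2^{d-k}$ (arising from the $2^{h+1}$ independent subtrees) does not overwhelm the doubly exponential decay coming from $N_{j-1}^2 \ge 2^{2^{j-1}}$. This is exactly where both the counting bound $N_d \ge 2^{2^{d-1}}$ and the constraint on $h$ are essential, and landing the constants on the exponent $2^{d-h-2}$ rather than something weaker requires the observation that $S$ is concentrated on its leading term.
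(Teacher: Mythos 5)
Your proof is correct, but it takes a genuinely different route from the paper's. The paper disposes of the lemma with a one-line union bound: for each of the $2^k$ vertex positions at depth $k \le h$, the subtree rooted at that position (conditioned on the position existing) is uniform over the $N_{d-k}$ binary trees of depth at most $d-k$, so each position is a leaf with probability at most $1/N_{d-k}$; summing $2^k/N_{d-k} \le 2^{-2^{d-k-2}}$ over $0 \le k \le h$ (the hypothesis $h \le d - \log_2 d - 2$ is what makes the middle inequality work) gives the claim. You instead compute the \emph{complementary} probability exactly: the good trees are in bijection with $2^{h+1}$-tuples of trees of depth at most $d-h-1$, so the bad probability is exactly $1 - N_{d-h-1}^{2^{h+1}}/N_d$, which you then control by telescoping the recurrence $N_j = N_{j-1}^2 + 1$. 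Your estimates all check out: the telescoping product identity is valid, $1 - 1/P \le P-1$ for $P \ge 1$, the sum $S$ is dominated by its first term so $S \le 2^{d-k-2^k} \le 1/2$ (in your notation $k = d-h-1$), hence $e^S - 1 \le 2S$, and the final absorption $d-k+1-2^k \le 1-2^{k-1}$ is exactly $d-k \le 2^{k-1}$, which your hypothesis guarantees. What your approach buys is tightness: your intermediate bound $2^{h+2-2^{d-h-1}}$ is strictly stronger than the stated $2^{1-2^{d-h-2}}$ (exponent $2^{d-h-1}$ rather than $2^{d-h-2}$), and since your starting point is an equality rather than a union bound, it shows the lemma's double-exponential rate is essentially the truth, not an artifact of the estimate. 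What the paper's approach buys is brevity -- a single chain of four inequalities -- and it never needs to analyze the structure of the complementary event at all.
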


\begin{proof}
Using a union bound and Fact \ref{fact:counting}, the probability that $T$ has a leaf with depth at most $h$ is upper bounded by
\[ \sum_{k=0}^h 2^k \Pr[\text{a vertex at depth $k$ is a leaf}] = \sum_{k=0}^h \frac{2^k}{N_{d-k}} \le \sum_{k=0}^h 2^{k - 2^{d-k-1}} \le \sum_{k=0}^h 2^{-2^{d-k-2}} \le 2^{1-2^{d-h-2}}. \]
\end{proof}

Picking a decision tree $T$ uniformly at random from the set of decision trees of depth at most $d$ is similar to, but not quite the same as, making $T$'s structure a uniformly random binary tree of depth at most $d$, then assigning variables to the internal vertices at random, and values to the leaves at random. However, the above conclusions also apply to uniformly random decision trees (in other words, Lemma \ref{lem:lowdepth2} implies Lemma \ref{lem:lowdepth}). The reason is that choosing a tree from a distribution where trees with different variables assigned to the internal vertices are considered to be distinct can only decrease the probability that an arbitrary vertex at a given depth is a leaf.


\section{Conclusions}

We have shown that, in the query complexity model, most short decision trees do not admit a quantum speed-up greater than a constant factor. An interesting open question is to determine the extent of the quantum speed-up which {\em is} possible. It is known that, for all boolean functions $f:\{0,1\}^n \rightarrow \{0,1\}$, $Q_2(f) \le n/2 + O(\sqrt{n})$~\cite{vandam98}. Could it be the case that all decision trees of depth $d$ can be computed with bounded error using $d/2 + O(\sqrt{d})$ quantum queries? Indeed, one can show that a non-zero speed-up is possible for essentially all decision trees: if $f$ has a decision tree of depth $d \ge 2$, then $Q_2(f) \le d-1$. The reason is that every boolean function on 2 bits can be computed with success probability $9/10$ using only one quantum query~\cite{kerenidis04}, so the quantum algorithm can simply follow the classical decision tree, replacing the last two classical queries with one quantum query.


\section*{Acknowledgements}

This work was supported by an EPSRC Postdoctoral Research Fellowship. I would like to thank Tony Short for his interpretation of Lemma \ref{lem:lowdepth}, and Ronald de Wolf for helpful comments on a previous version.


\bibliographystyle{plain}
\bibliography{../thesis}

\end{document}